\newcommand{\set}[2]{\left\{#1: \, #2\right\}}
\newcommand{\sclp}[2][]{\langle#2\rangle} 
\newcommand{\N}{\mathbb{N}}
\newcommand{\R}{\mathbb{R}}
\newcommand{\C}{\mathbb{C}}
\newcommand{\eps}{\varepsilon}
\newcommand{\ii}{\mathrm{i}}
\newcommand{\e}{\mathrm{e}}
\newcommand{\dom}{D} 
\newcommand{\dd}{\, \mathrm{d}}
\newtheorem{theorem}{Theorem}[section]
\theoremstyle{definition}
\theoremstyle{remark}
\newtheorem{remark}[theorem]{Remark}
\numberwithin{equation}{section}
\begin{document}
\setcounter{page}{1}

\color{darkgray}{
\noindent \centering
{\small   }\hfill    {\small }\\
{\small }\hfill  {\small }}

\centerline{}

\centerline{}

%------------------------------------------------------------------------------

%Title of the paper
\title[Functional-difference operators for  mirror curves]{Riesz-means  bounds for functional-difference operators for  mirror curves}

%Author names and affiliations
\author[D. Cardona]{Duv\'an Cardona$^1$}
\address{
  Duv\'an Cardona:
  \endgraf
  Department of Mathematics: Analysis, Logic and Discrete Mathematics
  \endgraf
  Ghent University, Belgium
  \endgraf
  {\it E-mail address} {\rm duvanc306@gmail.com, duvan.cardonasanchez@ugent.be}
  }

%\dedicatory{This paper is dedicated to Professor ABCD}
\thanks{{$^{1}$ {Website:}
\url{https://sites.google.com/site/duvancardonas/home}}
\newline   Duv\'an Cardona is supported  by the FWO  Odysseus  1  grant  G.0H94.18N:  Analysis  and  Partial Differential Equations and by the Methusalem programme of the Ghent University Special Research Fund (BOF)
(Grant number 01M01021). Duván Cardona has been supported by the FWO Fellowship
Grant No 1204824N and by the FWO Grant K183725N of the Belgian Research Foundation FWO}

%Abstract, keywords, math subject classification
\begin{abstract} Let \( P \) and \( Q \) be the quantum-mechanical momentum and position operators on \( L^2(\R) \).
Let $\zeta>0.$ We provide estimates  for the {\it Riesz means} $\varkappa(\lambda)$ associated with the system of eigenvalues of the operator
 \begin{align}
H(\zeta) = \e^{-bP} + \e^{bP} + \e^{2\pi b Q} + \zeta \e^{-2\pi b Q} = U + U^{-1} + V + \zeta V^{-1},
\end{align} when $\lambda\rightarrow\infty.$  This operator arises in the quantisation of the local {\it del Pezzo Calabi-Yau threefold}, defined
as the total space of the anti-canonical bundle over the {\it Hirzebruch surface}  \( S = \mathbb{P}^{1} \times \mathbb{P}^{1} \).   Our approach is motivated by the spectral analysis of $\varkappa(\lambda)$ in the framework developed by Laptev, Schimmer and Takhtajan in \cite{Laptev2016}.
\newline
\newline
\noindent \textit{Keywords.} Local mirror symmetry, Hirzebruch surface,  Riesz means bounds.
\newline
\noindent \textit{2020 Mathematics Subject Classification.} Primary 47A10; Secondary 34L20.
\end{abstract} \maketitle
\allowdisplaybreaks
\tableofcontents

%------------------------------------------------------------------------------

\section{Introduction}
\subsection{Overview} Let \( P \) and \( Q \) be the quantum-mechanical {\it momentum} and {\it position} operators on \( L^2(\R) \), defined on a common domain where they satisfy the Heisenberg commutation relation \( [P, Q]:=PQ-QP = \ii I \). Consider the associated {\it Weyl operators} \( U = \e^{-bP} \) and \( V = \e^{2\pi b Q} \), with \( b > 0 \). These are unbounded, self-adjoint operators on \( L^2(\R) \), and on their joint domain they satisfy the Weyl relation:
\begin{equation}\label{U:V}
    UV = q^2 VU,
\end{equation}
where \( q = \e^{\ii\pi b^2} \). From the perspective of the mathematical physics, an interesting fact was discovered in \cite{ADKMV}, where some operators constructed from the Weyl operators \( U \) and \( V \) were shown to arise in the context of {\it local mirror symmetry}, as quantizations of {\it algebraic curves mirror} to {\it toric Calabi-Yau threefolds}. Their spectral properties have been of great interest after the development built in \cite{GHM}. A typical example is provided by a local {\it del Pezzo Calabi-Yau threefold}, defined as the total space of the anti-canonical bundle over a toric {\it del Pezzo surface} \( S \). 

Moreover, when \( S = \mathbb{P}^{1} \times \mathbb{P}^{1} \)  is the Hirzebruch surface, one obtains the operator
\begin{align} \label{eq:type1}
H(\zeta) = \e^{-bP} + \e^{bP} + \e^{2\pi b Q} + \zeta \e^{-2\pi b Q} = U + U^{-1} + V + \zeta V^{-1},
\end{align}
where \( \zeta > 0 \) is a {\it mass parameter}, so that \( H = H(0) \), is the
functional-difference operator $H,$ defined on $L^2(\mathbb{R}),$  by
 $$ 
(H\psi)(x) = \psi(x + \ii b) + \psi(x - \ii b) + \e^{2\pi b x} \psi(x).
 $$ We observe that if instead of \( S \) one considers the weighted projective space \( \mathbb{P}(1,m,n) \) with \( m, n \in \N \), the corresponding operator is
\begin{align} \label{eq:type2}
H_{m,n} = \e^{-bP} + \e^{2\pi b Q} + \e^{bmP - 2\pi bnQ} = U + V + q^{-mn} U^{-m} V^{-n},
\end{align}
and again \( H = H_{1,0} \) (see \cite{GHM} for further details).

As noted in \cite{Laptev2016}, it was conjectured in \cite{GHM} that for \( \zeta > 0 \) and \( m, n \in \N \), the operators \( H(\zeta) \) and \( H_{m,n} \) have discrete spectra, their inverses are of trace class, and their Fredholm determinants admit explicit expressions in terms of enumerative invariants of the corresponding Calabi–Yau threefolds. Some of these conjectures were proven in \cite{Kashaev2015}, where explicit formulas for \( H(\zeta)^{-1} \) and \( H^{-1}_{m,n} \) were obtained in terms of the modular quantum dilogarithm. Our paper is motivated by the analysis in Laptev, Schimmer and Takhtajan in \cite{Laptev2016} where the authors have shown that these operators have  purely discrete spectrum and have proved that they have many other interesting properties. In particular that their Weyl fuctions $N(\lambda)$ grow as $\log^2(\lambda),$ when $\lambda\rightarrow\infty.$ We are interested in the distribution of the eigenvalues of $H(\zeta).$ In the next subsection we present our main result, see Theorem \ref{theo1}. In what follows for any $t\in \mathbb{R,}$ we use the notation $(t)_+:=\max\{t,0\}=(t+\vert t\vert)/2.$

\subsection{Notation} We denote by $A\asymp B,$ if we have inequalities $A\leq C B$ and $cA\geq B,$ with $C,c>0,$ independent of fundamental parameters. In particular, we use $A\lesssim B,$ if $A\leq CB,$ and $A\gg B$ if $A\geq cB,$ again with $C,c>0,$ independent of fundamental parameters.

\subsection{Main result}
In this paper we are interested in investigating the distribution of the eigenvalues of the operator $H(\zeta),$ with $\zeta>0.$ Indeed,  Laptev, Schimmer and Takhtajan in \cite{Laptev2016} proved that  the Riesz-means $$\varkappa(\lambda):=\sum_{j\geq 1}(\lambda-\lambda_j)_{+}$$  associated to the eigenvalues $\lambda_j,$ of $H(\zeta),$
obeys the asymptotic behaviour 
\begin{equation}\label{Laptev:2016:Weyl}
    \varkappa(\lambda)\asymp\frac{1}{\pi^2b^2}\lambda\log^2(\lambda)+O(\lambda\log(\lambda)),\,\quad \lambda\rightarrow \infty,
\end{equation} in the sense that for some constants $C_1,C_2\in \mathbb{R},$
\begin{equation}\label{aux} \frac{1}{\pi^2b^2}\lambda\log^2(\lambda)+C_1(\lambda\log(\lambda))    \leq \varkappa(\lambda)\leq \frac{1}{\pi^2b^2}\lambda\log^2(\lambda)+C_2(\lambda\log(\lambda)),\,\quad \lambda\rightarrow \infty.
\end{equation}
 Given continuity to the techniques from \cite{Laptev2016}, our approach reveals that a suitable {\it dyadic decomposition} of the phase space, see  \eqref{Regions} in Remark \ref{remark:intro}, shows evidence of the existence of other 5 lower orders terms in the behaviour of $\varkappa(\lambda)$. These lower order terms could involve,  when $\lambda\rightarrow\infty,$  terms with sizes, $\log (\lambda),$  $\log^2 (\lambda),$ $-\lambda\log(\lambda),$ $(\lambda-\log_2(\lambda))\log (\lambda-\log_2(\lambda)),$ and also of size $-(\lambda-\log_2(\lambda))\log^2 (\lambda-\log_2(\lambda)).$
The following is our main result.
\begin{theorem}\label{theo1}Let $\zeta>0.$ We have the following inequality for the Riesz-means of the eigenvalues $\lambda_j$ of $H(\zeta),$
\begin{align*}
  \varkappa(\lambda) =\sum_{j\geq 1}(\lambda-\lambda_j)_{+} \leq \frac{1}{\pi^2b^2}(\lambda \log^2(\lambda)),
\end{align*}    when $\lambda$ is large enough. Moreover,
\begin{align*}
    \varkappa(\lambda) &\asymp  \frac{1}{\pi^2b^2}(\lambda \log^2(\lambda)-(\lambda- \log_2(\lambda))\log^2(\lambda-\log_2(\lambda))\\
 &-2\lambda\log(\lambda)+2(\lambda-\log_2(\lambda))\log(\lambda-\log_2(\lambda))+\log^2(\lambda)+\log_2(\lambda)),
\end{align*}  when $\lambda$ is large enough.
\end{theorem}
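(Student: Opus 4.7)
Our plan is to adapt the semiclassical framework of Laptev–Schimmer–Takhtajan in \cite{Laptev2016}, refining their estimates via a dyadic decomposition of phase space. We associate to $H(\zeta)$ the classical Hamiltonian
\begin{equation*}
h(x,\xi):=2\cosh(b\xi)+\e^{2\pi b x}+\zeta\e^{-2\pi b x},\qquad (x,\xi)\in\R^2,
\end{equation*}
and reduce the study of $\varkappa(\lambda)$ to that of the phase-space integral
\begin{equation*}
\mathcal{I}(\lambda):=\iint_{\R^2}(\lambda-h(x,\xi))_+\dd x\dd\xi.
\end{equation*}
The Berezin–Li–Yau inequality, in the form employed in \cite{Laptev2016}, yields the upper bound $\varkappa(\lambda)\leq (2\pi)^{-1}\mathcal{I}(\lambda)$, while a family of coherent-state trial functions produces a matching lower bound up to a Gaussian smoothing error.

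For the first inequality of the theorem, we estimate $\mathcal{I}(\lambda)$ crudely by $(\lambda-h)_+\leq \lambda\,\mathbf{1}_{\{h\leq\lambda\}}$. The sublevel set $\{h\leq\lambda\}$ is contained in the rectangle defined by $|\xi|\leq b^{-1}\mathrm{arccosh}(\lambda/2)$ and $x_-(\lambda,\zeta)\leq x\leq x_+(\lambda,\zeta)$, where $x_\pm$ are the two roots of $\e^{2\pi bx}+\zeta\e^{-2\pi bx}=\lambda$. The elementary asymptotics $\mathrm{arccosh}(\lambda/2)=\log\lambda+O(\lambda^{-2})$ and $x_+-x_-=(\pi b)^{-1}\log\lambda+O(1)$ give a rectangle of area $\tfrac{2}{\pi b^2}\log^2\lambda+O(\log\lambda)$, and hence the bound $\varkappa(\lambda)\leq\tfrac{1}{\pi^2b^2}\lambda\log^2(\lambda)$ for all sufficiently large $\lambda$.

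For the finer two-sided equivalence, we refine the evaluation of $\mathcal{I}(\lambda)$ via the dyadic partition \eqref{Regions} from Remark \ref{remark:intro}: the sublevel set $\{h\leq\lambda\}$ is written as a disjoint union of dyadic pieces, on each of which exactly one, or two, of the three terms of $h$ dominates. After the rescaling $(y,\eta)=(2\pi bx,b\xi)$ with Jacobian $(2\pi b^2)^{-1}$, the main contribution comes from the bulk rectangle where all three terms are simultaneously bounded by $\lambda$, reproducing the leading term $\tfrac{1}{\pi^2b^2}\lambda\log^2(\lambda)$. The dyadic shells near the boundary account for the subleading corrections: the terms $(\lambda-\log_2(\lambda))\log^2(\lambda-\log_2(\lambda))$ and $(\lambda-\log_2(\lambda))\log(\lambda-\log_2(\lambda))$ arise from the outermost shells where two of the three terms of $h$ balance at a common scale, while the $\log^2(\lambda)$ and $\log_2(\lambda)$ corrections come from the corner regions where all three terms are comparable to $\lambda$ simultaneously. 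Each contribution reduces to an explicit one-dimensional integral of the form $\int_0^{\log\lambda}(\lambda-2\cosh\eta)_+\dd\eta$ or $\int_{y_-}^{y_+}(\lambda-\e^y-\zeta\e^{-y})\dd y$, which is computable in closed form.

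The principal obstacle will be to match the coherent-state lower bound with the Berezin–Li–Yau upper bound at the level of these subleading terms, since Gaussian coherent states generically introduce a smoothing error of order $\lambda\log\lambda$ which is larger than several of the claimed corrections. The plan is to choose coherent states whose widths are adapted to the dyadic scale of the region in which they are centered, so that the smoothing error on each shell is absorbed into the correction generated by the next dyadic scale; the careful bookkeeping of these scales, together with the explicit integral evaluations above, is expected to yield the stated expansion.
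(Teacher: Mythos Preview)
Your proposal is more of a plan than a proof, and it misidentifies the main difficulty. The paper's argument is considerably simpler than what you sketch.

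First, the paper does not need to manipulate coherent states itself: it simply quotes from \cite{Laptev2016} the two-sided bound
\[
I_2\;\leq\;\varkappa(\lambda)\;\leq\;I_1,\qquad
I_1=4\iint_{\mathbb R_0^+\times\mathbb R_0^+}\!\big(\lambda-2d_1\cosh(2\pi bk)-2d_2\cosh(2\pi by)\big)_+\,dk\,dy,
\]
and the analogous $I_2$ with $d_i$ replaced by $1/d_i$. The coherent-state smoothing error you worry about is already absorbed into the fixed constants $d_1,d_2$ and $1/d_1,1/d_2$; no adaptive coherent states are needed. The entire problem is then to show that \emph{both} $I_1$ and $I_2$ are $\asymp F(\lambda)$, where $F$ is the six-term expression in the statement. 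Since ``$\asymp$'' here means two-sided comparability up to constants, there is no need to match subleading terms exactly between the upper and lower bounds. Your ``principal obstacle'' is therefore not an obstacle at all.

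Second, your dyadic decomposition is not the one referenced in \eqref{Regions}. You propose partitioning by which of the three terms of $h$ dominates; the paper instead partitions the sublevel set by the \emph{value} of the integrand, setting
\[
\Omega_j=\{(y,k):\;2^{j-1}\leq \lambda-2d\cosh(2\pi bk)-2d\cosh(2\pi by)\leq 2^j\},
\]
so that $I_1\leq\frac{1}{\pi^2b^2}\sum_j 2^j|\Omega_j|$ (after the change of variables $(u,v)=(\cosh(2\pi bk),\cosh(2\pi by))$). Each shell volume $R_j$ is then bounded above and below by $\asymp\log^2\!\big(\tfrac{\lambda-2^{j-1}}{2d}\big)$ via elementary one-variable estimates for $\int_1^B(u^2-1)^{-1/2}\cosh^{-1}(B-u)\,du$. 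The sum $\sum_{1\leq j\lesssim\log_2\lambda}2^j\log^2((\lambda-2^{j-1})/d)$ is then compared to the integral $\int_1^{\log_2\lambda}\log^2((\lambda-x)/d)\,dx$, whose explicit antiderivative $t\log^2 t-2t\log t+2t$ generates all six terms of $F(\lambda)$ when evaluated between $t\sim\lambda-\log_2\lambda$ and $t\sim\lambda$. This is where the peculiar combination involving $\lambda-\log_2\lambda$ comes from---not from any ``corner region where all three terms are comparable.''

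In short: drop the adaptive coherent-state programme, take the bounds $I_1,I_2$ as given, use the level-set dyadic decomposition \eqref{Regions} rather than a dominant-term decomposition, and recognise that the target is $\asymp$, not a sharp asymptotic.
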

\begin{remark}
   In what follows, we illustrate the size of $\varkappa.$
\begin{figure}[ht]
    \centering
    \includegraphics[width=0.45\textwidth]{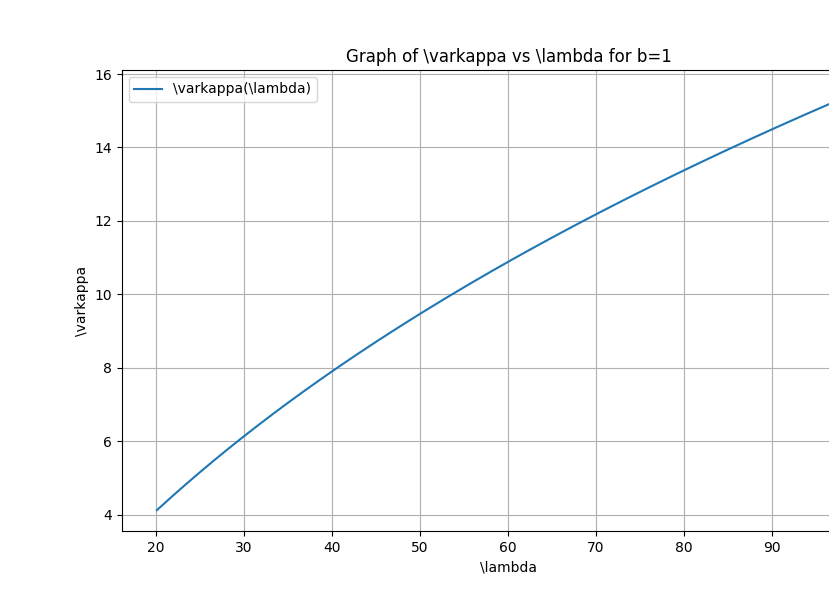}\includegraphics[width=0.45\textwidth]{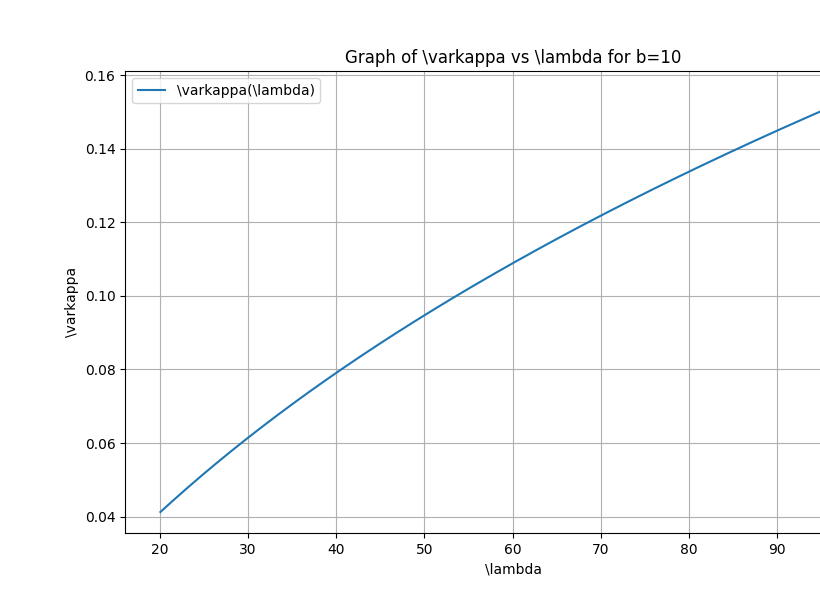}
    \caption{Graphs of the Riesz means $\varkappa$ for $b=1,$ and for $b=10,$ respectively, in the interval $[20.1,97]$.}
    \label{fig:region-cosh}
\end{figure}
\end{remark}
\subsection{Methodology} Let us briefly discuss our methodology for the proof of Theorem \ref{theo1} in Remark \ref{remark:intro} below. 
\begin{figure}[ht]
    \centering
    \includegraphics[width=0.55\textwidth]{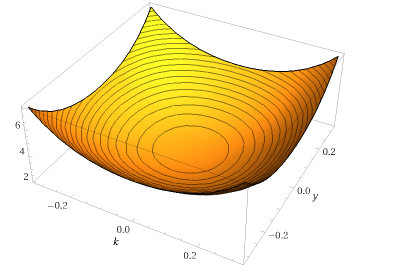}\includegraphics[width=0.42\textwidth]{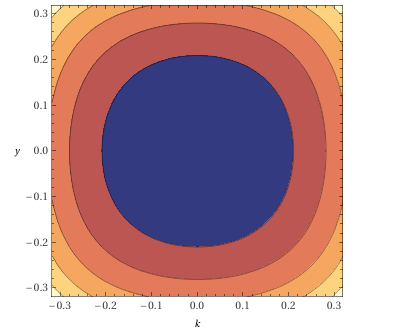}
    \caption{The surface \( z = \cosh(2\pi k) + \cosh(2\pi y) \) in \( \mathbb{R}^3 \) and its level curves in the \( (k, y) \) plane.}
    \label{fig:region-cosh}
\end{figure}
\begin{remark}\label{remark:intro} Our approach is motivated by the techniques in \cite[Chapter 7]{CDR:AA}, see also \cite{CDR:JMP}, which can be traced back to Delgado \cite{Del2006}.
    As it was observed in Remark 2.4 in \cite{Laptev2016},  the leading term $$\lambda\log^2(\lambda)/(\pi b)^2,$$ coincides with the leading term in the phase
space integral
$$\iint_{\mathbb{R}^2}(\lambda-\sigma(y,k))_{+}dk dy,\quad \lambda\rightarrow\infty,$$
 where $\sigma(y,k)$ is the pseudo-differential symbol of the operator $H(\zeta).$  By taking as a model the dyadic decompositions of symbols in harmonic analysis, we will apply in our further analysis a dyadic decomposition of the phase space for the corresponding integrals estimating the  Riesz-means $\varkappa(\lambda)$ of $H(1).$ Indeed, in view of Remark \ref{main:remark}, our approach forces us to understand the geometry of the regions:
 \begin{equation}\label{Regions}
    \Omega_{j}=\{(y,k):2^{j-1}\leq \lambda-2d\cosh(2\pi bk)- 2d\cosh(2\pi by)\leq 2^j \},\quad j\in \mathbb{Z}.
\end{equation} There are exactly two classes of sets where the contributions of the volume of the sets $\Omega_j$ produces differents lower order terms. The information of all those $\Omega_j$ with $j<0$  provides a $\log(\lambda)$ error term. The geometry of the sets $\Omega_j$ with $1\leq j\leq\frac{\log(\lambda-4d)}{\log 2}+1,$ encodes the information of the leading term $\lambda\log^2(\lambda)$ and also of the other lower order terms. An interesting fact of the sets $\Omega_j$ is that they look like {\it twisted annuli}, see Figure \ref{fig:region-cosh}.
\end{remark}
\begin{remark} Let us consider the Weyl eigenvalue counting formula $N:=N(\lambda)$ of the operator $H(\zeta).$
    One can use the analysis in \cite[Page 297]{Laptev2016} in order to deduce that
    \begin{equation}
        \frac{\varkappa(\lambda)}{\lambda}\leq  N(\lambda)\leq \frac{\varkappa((\tau+1)\lambda)}{\tau\lambda}\leq \frac{1}{\pi^2b^2}(1+\tau^{-1}) \log^2((1+\tau)\lambda),\,\quad \forall\tau>0.
    \end{equation} 
\end{remark}
\begin{remark}
    Our approach can be also adapted for the family of operators $H_{m,n}$ in \eqref{eq:type2}. However, we omit such an analysis here. 
\end{remark}
\subsection{Organisation of the paper}
This paper is organised as follows. In Section \ref{Preliminaries}, we provide the preliminaries related to the spectral theory of the operator $H(\zeta),$ according to the analysis in Laptev, Schimmer and Takhtajan in \cite{Laptev2016}. The proof of Theorem \ref{theo1} is presented in Section \ref{Proof:theo}.

\section{Preliminaries}\label{Preliminaries}

In this section we present the preliminaries about the construction of the operators $H(\zeta).$ For the general aspects of the spectral theory of self-adjoint operators on Hilbert spaces, we refer the reader, e.g., to 
 Birman and  Solomjak \cite{Birman1987}. For the spectral theory of $H(\zeta)$ we follow \cite{Laptev2016}.
\subsection{Weyl operators}
Let us consider the standard Lebesgue spaces $ L^1(\R)$ and $ L^2(\R),$ defined by
\begin{equation*}
    f\in L^1(\mathbb{R})\Longleftrightarrow  f:\mathbb{R}\rightarrow \mathbb{C}\textnormal{ is Lebesgue measurable, and }\smallint_{\mathbb{R}}|f(x)|dx<\infty,
\end{equation*}
\begin{equation*}
    \psi\in L^2(\mathbb{R})\Longleftrightarrow  \psi:\mathbb{R}\rightarrow \mathbb{C}\textnormal{ is Lebesgue measurable, and }\smallint_{\mathbb{R}}|\psi(x)|^2dx<\infty,
\end{equation*} where $dx$ denotes the Lebesgue measure on the $\sigma$-algebra of Borel sets in $\mathbb{R}.$

Let us
 consider the associated Weyl operators \( U = \e^{-bP} \) and \( V = \e^{2\pi b Q} \), with \( b > 0 \). These are unbounded, self-adjoint operators on \( L^2(\R) \). In the coordinate representation, the action of the operators \( P \) and \( Q \) is given by \( (P\psi)(x) = \ii \psi'(x) \) and \( (Q\psi)(x) = x\psi(x) \). The Weyl operators act as \( (U\psi)(x) = \psi(x + \ii b) \) and \( (V\psi)(x) = \e^{2\pi b x} \psi(x) \). Their respective domains are characterized by:
\begin{align*}
\dom(U) &= \set{\psi \in L^2(\R)}{\e^{-2\pi b k} (\mathscr{F}{\psi})(k) \in L^2(\R)}, \\
\dom(V) &= \set{\psi \in L^2(\R)}{\e^{2\pi b x} \psi(x) \in L^2(\R)},
\end{align*}
where \( \mathscr{F} \) denotes the Fourier transform, densely defined for $f\in L^1(\mathbb{R})\cap L^2(\mathbb{R})$ by:
\begin{align*}
\widehat{f}(k) = (\mathscr{F} f)(k) = \smallint_\R \e^{-2\pi \ii k x} f(x) \dd x.
\end{align*}
We also remark that alternatively, the domain \( \dom(U) \) can be described as the set of functions \( \psi(x) \) that extend analytically to the strip $$  \set{z = x + \ii y \in \C}{0 < y < b}, $$ with \( \psi(x + \ii y) \in L^2(\R) \) for all \( 0 \leq y < b \), and such that the limit
\[
\psi(x + \ii b - \ii 0) = \lim_{\eps \to 0^+} \psi(x + \ii b - \ii \eps),
\]
exists with respect to the topology induced by the norm in \( L^2(\R) \). We denote this limit simply by \( \psi(x + \ii b) \). The domain of the inverse operator \( U^{-1} \) admits a similar characterisation.
\subsection{The spectral analysis of Laptev, Schimmer and Takhtajan for $H(\zeta)$}
Let us use the Weyl operators \( U \) and \( V \) in \eqref{U:V} in order to define
\[
H = U + U^{-1} + V,
\]
which, in the coordinate representation, becomes a functional-difference operator with exponential potential given by
\[
(H\psi)(x) = \psi(x + \ii b) + \psi(x - \ii b) + \e^{2\pi b x} \psi(x).
\] 
As a twisted verion of $H,$ for any $\zeta>0,$ we consider the operators \begin{align} \label{eq:type1:8}
H(\zeta) = \e^{-bP} + \e^{bP} + \e^{2\pi b Q} + \zeta \e^{-2\pi b Q} = U + U^{-1} + V + \zeta V^{-1}.
\end{align}
\begin{remark}As it was mentioned in the introduction, it was conjectured in \cite{GHM} that for \( \zeta > 0 \) and \( m, n \in \N \), the operators \( H(\zeta) \) and \( H_{m,n} \) have discrete spectra, their inverses are of trace class, and their Fredholm determinants admit explicit expressions in terms of enumerative invariants of the corresponding Calabi–Yau threefolds. Some of these conjectures were proven in the recent work \cite{Kashaev2015}, where explicit formulas for \( H(\zeta)^{-1} \) and \( H^{-1}_{m,n} \) were obtained in terms of the modular quantum dilogarithm.    
\end{remark}

\begin{remark}
    The analysis in   \cite{Laptev2016} has obtained  Weyl-type asymptotics for the operators \( H(\zeta) \) and \( H_{m,n} \), considered as self-adjoint operators on \( L^2(\R) \). Specifically, the authors in  \cite{Laptev2016} have established that both operators have purely discrete spectra and have analysed the asymptotic behavior of their eigenvalues. As a consequence of their analysis, it follows that \( H(\zeta)^{-1} \) and \( H_{m,n}^{-1} \) are of trace class.
\end{remark}

\begin{remark} The  main results in \cite{Laptev2016}, which describe the asymptotic behavior of the Riesz mean \( \sum_{j \ge 1} (\lambda - \lambda_j)_+ \), and also state the corresponding Weyl laws for the eigenvalue counting function \( N(\lambda) \) associated with these operators, prove that
\begin{equation} \label{W-zeta}
\lim_{\lambda \rightarrow \infty} \frac{N(\lambda)}{\log^2 \lambda} = \frac{1}{(\pi b)^2},
\end{equation}
for the operator \( H(\zeta) \), and
\begin{equation} \label{W-mn}
\lim_{\lambda \rightarrow \infty} \frac{N(\lambda)}{\log^2 \lambda} = \frac{c_{m,n}}{(2\pi b)^2}, \quad c_{m,n} = \frac{(m+n+1)^2}{2mn},
\end{equation}
for the operator \( H_{m,n} \). We remark that the proof in \cite{Laptev2016} follows ideas originally developed in \cite{Laptev1997}, where the Fourier transform is replaced by the coherent state transform. The techniques employed in \cite{Laptev2016} also draw inspiration from the derivation of the Berezin–Lieb inequality \cite{Berezin1972,Berezin1972b,Lieb1973}.    
\end{remark}

\begin{remark} The operator $H$ first appeared in the context of the quantum Liouville model on the lattice \cite{Faddeev1986}, and it plays a significant role in the representation theory of the non-compact quantum group \( \mathrm{SL}_{q}(2, \R) \). In the momentum representation, the operator \( H \) corresponds to the Dehn twist operator in quantum Teichm\"uller theory \cite{Kashaev2001}. In particular, \cite{Kashaev2001} formulated the eigenfunction expansion theorem for \( H \) in the momentum representation as formal completeness and orthogonality relations, interpreted in the distributional sense.    
\end{remark}

\begin{remark}
    The spectral analysis of $H$ was carried out in \cite{Faddeev2014}, where it was shown that \( H \) is self-adjoint with purely absolutely continuous spectrum equal to the interval \( [2, \infty) \). 
\end{remark}
\subsubsection{The Coherent State Representation}\label{subsec:coherent}
Let $g$ be the Gaussian function $$g(x)=(a/\pi)^{1/4}\e^{-\frac a2 x^2},$$ where  $a>0$. Observe that $g$ belongs to the unit ball in $L^2(\R)$. 
For $\psi\in L^2(\R)$ the classical coherent state transform (see e.g. \cite[Chapter 12]{Lieb2001}) is given by
\begin{align*}
\widetilde{\psi}(k,y)=\int_\R\e^{-2\pi\ii k x}g(x-y)\psi(x)\dd x\,.
\end{align*} For $\psi\in\dom(H_0)$ we have  the representation 
\begin{align}
\sclp[2]{H_0\psi,\psi}=d_1\iint_{\R^{2}} 2\cosh(2\pi b k)|\widetilde{\psi}(k,y)|^2\dd k\dd y,
\label{eq:rep1} 
\end{align}
where
$$d_1=\e^{-ab^2/4}<1.$$ Similarly, for $\psi\in\dom(W),$  where $W=2\cosh(2\pi b x),$ we have the representation 
\begin{align}
\sclp[2]{W\psi,\psi}=d_2\iint_{\R^{2}} 2\cosh(2\pi b y)|\widetilde{\psi}(k,y)|^2\dd k\dd y,
\label{eq:rep2} 
\end{align} where $d_2=e^{-(\pi b)^2/a}.$
 Then, one has the identity, see \cite{Laptev2016},
\begin{equation} \label{H-coherent}
\sclp[2]{H(1)\psi,\psi}=\iint_{\R^{2}}2(d_1\cosh(2\pi b k)+d_2\cosh(2\pi by))|\widetilde{\psi}(k,y)|^2\dd k\dd y.
\end{equation}
In the next section we investigate the asymptotics for the Riesz-means of the oeprator $H(\zeta).$

\section{Proof of the main theorem}\label{Proof:theo}
%---------------------------------------------------------------------------------------%

Let $\zeta>0.$ Let us consider the Weyl operators $U$ and $V,$ and then the operator \begin{align} \label{eq:type1:2}
H(\zeta) = \e^{-bP} + \e^{bP} + \e^{2\pi b Q} + \zeta \e^{-2\pi b Q} = U + U^{-1} + V + \zeta V^{-1}.
\end{align} Following the strategy in  \cite{Laptev2016}, in order to prove Theorem \ref{theo1}, we restrict the analysis of the Riesz means of $H(\zeta),$  just to the case $\zeta=1.$ From now we fix 
\begin{equation}
   \varkappa(\lambda)= \sum_{j\geq 1}(\lambda-\lambda_j)_+,
\end{equation} where $\{\lambda_j\}_{\,j\geq 1},$ denotes the system of eigenvalues of the operator $H(1).$ We recall that for any $t\in \mathbb{R,}$ $(t)_+:=\max\{t,0\}=(t+\vert t\vert)/2.$ 
\begin{remark} As it was observed in Remark 2.4 in \cite{Laptev2016}, and mentioned in the introduction, the leading term $\lambda\log^2(\lambda)/(\pi b)^2$ coincides with the leading term in the phase
space integral
$$\iint_{\mathbb{R}^2}(\lambda-\sigma(y,k))_{+}dk dy,\quad \lambda\rightarrow\infty,$$
 where $\sigma(y,k)$ is the pseudo-differential symbol of the operator $H(\zeta).$  By taking as a model the dyadic decompositions of symbols in harmonic analysis, we will apply in our further analysis a Littlewood-Paley partition of the phase space for the corresponding integrals estimating the  Riesz-means $\varkappa(\lambda)$ of $H(1).$
 \end{remark}
 \begin{remark}\label{main:remark}
      We start by considering the following upper and lower bounds for $\varkappa(\lambda)$, see \cite[Pages 294 and 296]{Laptev2016}, which are a consequence of the identity in \eqref{H-coherent},
\begin{align*}
\sum_{j\geq 1}(\lambda-\lambda_j)_+\leq I_1= 4{\iint}_{\mathbb{R}^+_0\times \mathbb{R}^+_0} \big(\lambda-2d_1\cosh(2\pi bk)- 2d_2\cosh(2\pi by)\big)_+d kd y, 
\end{align*} and 
\begin{align*}
\sum_{j\geq 1}(\lambda-\lambda_j)_+ 
&\geq I_2= 4 \iint_{\mathbb{R}^+_0\times \mathbb{R}^+_0}\left(\lambda-\frac{2}{d_1}\cosh(2\pi bk)-\frac{2}{d_2}\cosh(2\pi by)\right)_+d k d y. 
\end{align*} Observe the presence of the factor $4$ in the right hand side of these inequalities in view of the symmetry of the surface $z = \cosh(2\pi k) + \cosh(2\pi y).$
\begin{figure}[ht]
    \centering
    \includegraphics[width=0.45\textwidth]{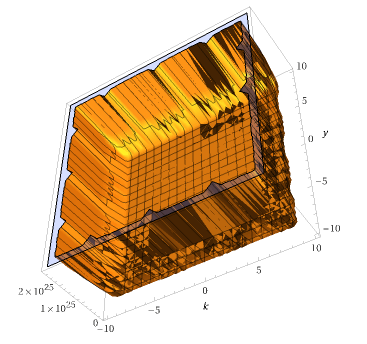}\includegraphics[width=0.40\textwidth]{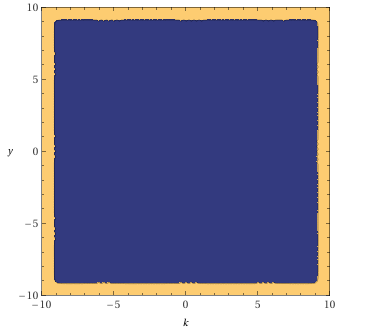}
    \caption{Symmetry of the surface \( z = \cosh(2\pi k) + \cosh(2\pi y) \) in \( \mathbb{R}^3 \) on the domain \( \{(k, y):-10\leq k,y\leq 10\} \).}
    \label{fig:region-cosh:2}
\end{figure} 
 \end{remark}
 
\begin{proof}[Proof of Theorem \ref{theo1}]
Consider the integral
\[
I := \iint_{\mathbb{R}^+_0 \times \mathbb{R}^+_0} \big(\lambda - 2d_1 \cosh(2\pi b k) - 2d_2 \cosh(2\pi b y)\big)_+ \, dk \, dy,
\]
where \(d_1, d_2, b > 0\) are as above and \(\lambda\) is such that the integrand is not identically zero, and in particular such that $\lambda>2d_1+2d_2$.
\subsection{The upper bound $(\pi b)^{-2}\lambda\log^2\lambda$}
Observe that for $k,y\geq 0,$ we have the following analysis. 
The integrand is positive in the region of the $(k,y)$-plane where
\[
\lambda - 2d_1 \cosh(2\pi b k) - 2d_2 \cosh(2\pi b y) > 0.
\]
Fix \(k \ge 0\) and  $\lambda$ large enough. Then the condition in \(y\) becomes
\[
\cosh(2\pi b y) < \frac{\lambda - 2d_1 \cosh(2\pi b k)}{2d_2}.
\]
Since \(\cosh(2\pi b y) \geq 1\), it is necessary that
\[
\lambda - 2d_1 \cosh(2\pi b k) > 2d_2,
\]
which defines the domain in \(k\):
\[
0 \leq k < k_0 := \frac{1}{2\pi b} \operatorname{arccosh}\left( \frac{\lambda - 2d_2}{2d_1} \right).
\]
For fixed \(k\), define
\[
y_{\max}(k) := \frac{1}{2\pi b} \operatorname{arccosh}\left( \frac{\lambda - 2d_1 \cosh(2\pi b k)}{2d_2} \right).
\]
Then,
\[
\begin{aligned}
&\int_0^{y_{\max}(k)} \big(\lambda - 2d_1 \cosh(2\pi b k) - 2d_2 \cosh(2\pi b y)\big) \, dy\\
&= \left(\lambda - 2d_1 \cosh(2\pi b k)\right) y_{\max}(k)  - \frac{2 d_2}{2\pi b} \sinh\big(2\pi b y_{\max}(k)\big).
\end{aligned}
\]
Using the identity \(\sinh(\operatorname{arccosh}(u)) = \sqrt{u^2 - 1}\) and setting
\[
u(k) := \frac{\lambda - 2d_1 \cosh(2\pi b k)}{2 d_2},
\]
we have
\[
y_{\max}(k) = \frac{1}{2\pi b} \operatorname{arccosh} u(k), \quad \sinh(2\pi b y_{\max}(k)) = \sqrt{u(k)^2 - 1}.
\]
Hence,
\[
I \leq  \int_0^{k_0} \frac{1}{2\pi b} \left[ \big(\lambda - 2d_1 \cosh(2\pi b k)\big) \operatorname{arccosh} u(k) - 2 d_2 \sqrt{u(k)^2 - 1} \right] dk,
\]
where
\[
k_0 = \frac{1}{2\pi b} \operatorname{arccosh} \left( \frac{\lambda - 2 d_2}{2 d_1} \right).
\]
Using the change  of variables \(u = \cosh(2\pi b k)\), with
\[
dk = \frac{1}{2\pi b} \frac{du}{\sqrt{u^2 - 1}},
\]
and the integration limits
\[
u \in \left[1, \frac{\lambda - 2 d_2}{2 d_1}\right],
\]
we get
\[
I \leq  \frac{1}{(2\pi b)^2} \int_1^{\frac{\lambda - 2 d_2}{2 d_1}} \frac{
(\lambda - 2 d_1 u) \operatorname{arccosh}\left(\frac{\lambda - 2 d_1 u}{2 d_2}\right)
- 2 d_2 \sqrt{\left(\frac{\lambda - 2 d_1 u}{2 d_2}\right)^2 - 1}
}{\sqrt{u^2 - 1}} \, du
\]
\[
\leq  \frac{1}{(2\pi b)^2} \int_1^{\frac{\lambda - 2 d_2}{2 d_1}} \frac{
1}{\sqrt{u^2 - 1}}  \, du   \times \lambda \operatorname{arccosh}\left(\frac{\lambda }{2 d_2}\right) 
\]
\[
\leq  \frac{1}{(2\pi b)^2} \operatorname{arccosh}\left(\frac{\lambda-2d_2 }{2 d_1}\right)   \times \lambda \operatorname{arccosh}\left(\frac{\lambda }{2 d_2}\right)
\]
\[
\leq  \frac{1}{(2\pi b)^2} \operatorname{log}\left(\frac{\lambda-2d_2 }{ d_1}\right)   \times \lambda \operatorname{log}\left(\frac{\lambda }{ d_2}\right).
\]
In consequence we have proved that
\begin{align*}
\sum_{j\geq 1}(\lambda-\lambda_j)_+\leq I_1= &4{\iint}_{\mathbb{R}^+_0\times \mathbb{R}^+_0} \big(\lambda-2d_1\cosh(2\pi bk)- 2d_2\cosh(2\pi by)\big)_+d kd y\\
&\leq \frac{1}{(\pi b)^2} \operatorname{log}\left(\frac{\lambda-2d_2 }{ d_1}\right)   \times \lambda \operatorname{log}\left(\frac{\lambda }{ d_2}\right).
\end{align*} In consequence, when $\lambda $ is large enough, we get 
\begin{equation}
    \sum_{j\geq 1}(\lambda-\lambda_j)_+\leq \frac{1}{\pi^2b^2}\lambda\log^2\lambda.
\end{equation}
\subsection{The dyadic analysis of the phase space: an upper bound}
Now, we are going to prove that
\begin{align*}
    \varkappa(\lambda) &\asymp  \frac{1}{\pi^2b^2}(\lambda \log^2(\lambda)-(\lambda- \log_2(\lambda))\log^2(\lambda-\log_2(\lambda))\\
 &-2\lambda\log(\lambda)+2(\lambda-\log_2(\lambda))\log(\lambda-\log_2(\lambda))\\
     &+\log^2(\lambda)+\log_2(\lambda)).
\end{align*}
Let us analyse $I_1. $ So, let us take $d=\min\{d_1,d_2\},$ let us assume that $\lambda>4d-1,$ and let us use that  
\begin{align*}
I_1 &\leq 4 \iint_{\mathbb{R}^+_0\times \mathbb{R}^+_0}  \big(\lambda-2d_1\cosh(2\pi bk)- 2d_2\cosh(2\pi by)\big)_+d kd y \\
&\leq 4 \iint_{\mathbb{R}^+_0\times \mathbb{R}^+_0} \big(\lambda-2d\cosh(2\pi bk)- 2d\cosh(2\pi by)\big)_+d kd y.
\end{align*} Let us adopt the perspective of the Littlewood-Paley theory for pseudo-differential operators, and let us decompose the set $\mathbb{R}^+_0\times \mathbb{R}^+_0$ in the family of dyadic regions  
\begin{equation}
    \Omega_{j}=\{(y,k):2^{j-1}\leq \lambda-2d\cosh(2\pi bk)- 2d\cosh(2\pi by)\leq 2^j \},\quad j\in \mathbb{Z}.
\end{equation}
Then, we have that
\begin{align*}
    I_1 &\leq 4  \iint_{\mathbb{R}^+_0\times \mathbb{R}^+_0} \big(\lambda-2d\cosh(2\pi bk)- 2d\cosh(2\pi by)\big)_+d kd y\leq  4\sum_{j\in \mathbb{Z}}\iint_{ \Omega_{j}} 2^jdkdy. 
\end{align*}
Note that for any $j\in \mathbb{Z},$ we have that 
\begin{align*}
     \Omega_{j}=\{(y,k):\frac{\lambda-2^j}{2d}\leq \cosh(2\pi bk)+ \cosh(2\pi by)\leq \frac{\lambda-2^{j-1}}{2d} \}.
\end{align*}
Note that
$$\forall (y,k),\quad 2\leq \cosh(2\pi bk)+ \cosh(2\pi by). $$ In consequence, 
$$ \frac{\lambda-2^{j-1}}{2d} <2\Longleftrightarrow j>\frac{\log(\lambda-4d)}{\log 2}+1,\textnormal{ and for these values of } j,\,\quad \Omega_j=\{\emptyset\}.$$
In consequence, the decomposition of $\mathbb{R}^+_0\times \mathbb{R}^+_0$ in the sets $\Omega_j$ runs over $j\in \mathbb{Z}$ such that $j\leq \frac{\log(\lambda-4d)}{\log 2}+1. $

In consequence, the change of variables $(u,v)=(\cosh(2\pi bk),\cosh(2\pi by))$ leads to 

\begin{align*} 
&4\sum_{j\in \mathbb{Z}}\iint_{ \Omega_{j}} 2^jdkdy \\
&= \frac{1}{\pi^2b^2} \sum_{j\leq \frac{\log(\lambda-4d)}{\log 2}+1} 2^j \iint_{  \{(u,v): \frac{\lambda-2^j}{2d}\leq u+v\leq \frac{\lambda-2^{j-1}}{2d},\,u,v\geq 1 \} }\frac{dudv}{\sqrt{u^2-1}\sqrt{v^2-1}}.
\end{align*} Let us analyse the integral
$$ R_j:= \iint_{  \{(u,v): \frac{\lambda-2^j}{2d}\leq u+v\leq \frac{\lambda-2^{j-1}}{2d},\,u,v\geq 1 \} }\frac{dudv}{\sqrt{u^2-1}\sqrt{v^2-1}}.$$
   \begin{item}
   \item Case 1: $\frac{\lambda-2^j}{2d}<1.$ In this case $\frac{\log(\lambda-2d)}{\log 2}<j,$ and with $B=\frac{\lambda-2^{j-1}}{2d},$ we have that
   \begin{align*}
       R_j &= \iint_{  \{(u,v): \frac{\lambda-2^j}{2d}\leq u+v\leq \frac{\lambda-2^{j-1}}{2d},\,u,v\geq 1 \} }\frac{dudv}{\sqrt{u^2-1}\sqrt{v^2-1}}\\
       &\leq \int\limits_{1}^B\frac{1}{\sqrt{u^2-1}}\int\limits_{1}^{B-u}\frac{dvdu}{\sqrt{v^2-1}}\\
       &\leq \int\limits_{1}^B\frac{du}{\sqrt{u^2-1}}\int\limits_{1}^{B-1}\frac{dv}{\sqrt{v^2-1}}\\
       &= \int\limits_{1}^B\frac{\cosh^{-1}(B-1)du}{\sqrt{u^2-1}}\\
       &= \int\limits_{1}^B\frac{\log(B-1+\sqrt{(B-1)^2-1})du}{\sqrt{u^2-1}}\\
       &\leq \log(2B)\int\limits_{1}^B\frac{du}{\sqrt{u^2-1}}= \log(2B)\cosh^{-1}(B)\leq (\log(2B))^2.
   \end{align*}
   \item Case 2: $\frac{\lambda-2^j}{2d}\geq 1,$ and then $\frac{\log(\lambda-2d)}{\log 2}\geq j.$ Then, with $A= \frac{\lambda-2^j}{2d}$ and with $B=\frac{\lambda-2^{j-1}}{2d},$ we have that
   \end{item}
   \begin{align*}
         R_j &= \iint_{  \{(u,v): \frac{\lambda-2^j}{2d}\leq u+v\leq \frac{\lambda-2^{j-1}}{2d},\,u,v\geq 1 \} }\frac{dudv}{\sqrt{u^2-1}\sqrt{v^2-1}}\\
         &\leq \int\limits_{1}^A\frac{1}{\sqrt{u^2-1}}\int\limits_{1}^{B-u}\frac{dv du}{\sqrt{v^2-1}}+ \int\limits_{A}^B\frac{1}{\sqrt{u^2-1}}\int\limits_{1}^{B-u}\frac{dv du}{\sqrt{v^2-1}}\\
         &\leq \cosh^{-1}(A)\log(2B)+ (\cosh^{-1}(B)-\cosh^{-1}(A))\log(2B)\\
         &\leq (\log(2B))^2.
   \end{align*} The analysis above allows us to estimate
   \begin{align*}
      & \sum_{j\leq \frac{\log(\lambda-4d)}{\log 2}+1} 2^j R_j\\
      &=  \sum_{j\leq \frac{\log(\lambda-4d)}{\log 2}+1 : -\infty<j\leq 0  } 2^j R_j+\sum_{0 <j\leq \frac{\log(\lambda-4d)}{\log 2}+1 } 2^j R_j\\
       &\leq \sum_{j\leq \frac{\log(\lambda-4d)}{\log 2}+1 : -\infty<j\leq 0   } 2^j \log^2\left(\frac{\lambda-2^{j-1}}{d}\right)+\sum_{0 <j\leq \frac{\log(\lambda-4d)}{\log 2}+1 } 2^j R_j\\
         &\leq 2 \log^2\left(\frac{\lambda}{d}\right) +\sum_{1\leq j\leq \frac{\log(\lambda-4d)}{\log 2}+1 } 2^j R_j
         \\
         &\leq 2\log^2\left(\frac{\lambda}{d}\right)+\sum_{1\leq j\leq \frac{\log(\lambda-4d)}{\log 2}+1 } 2^j \left(\log\left(\frac{\lambda-2^{j-1}}{d}\right)\right)^2.
   \end{align*}
   Observe that in the interval $$1\leq x\leq \frac{\log(\lambda-4d)}{\log 2}+1 ,  $$ we have that 
$$ \frac{\lambda}{d}-\frac{\log(\lambda-4d)}{d\log 2}-\frac{1}{d}\leq \frac{\lambda-x}{d}\leq \frac{\lambda-1}{d},  $$ and we can assure that the lower bound $\frac{\lambda}{d}-\frac{\log(\lambda-4d)}{d\log 2}-\frac{1}{d}>1,$ if $\lambda $ is large enough. 
In this intervale the function $f(x)= \left(\log\left(\frac{\lambda-x}{d}\right)\right)^2$ is decreasing. 
   We can make the estimate of the last series as a Riemann summation on the interval $[1, \frac{\log(\lambda-4d)}{\log 2}+1]$ with the interior points being $x_j=2^{j-1}$ and with $\Delta x_j\leq  2^{j-1}-2^{j-2}=2^{j-2}$. Indeed,
   \begin{align*}
      & \sum_{1\leq j\leq \frac{\log(\lambda-4d)}{\log 2}+1 } 2^j \left(\log\left(\frac{\lambda-2^{j-1}}{d}\right)\right)^2 \\&=4\sum_{1\leq j\leq \frac{\log(\lambda-4d)}{\log 2}+1 } 2^{j-2} \left(\log\left(\frac{\lambda-2^{j-1}}{d}\right)\right)^2\\
       &\leq 4\int_1^{\frac{\log(\lambda-4d)}{\log 2}+1}\left(\log\left(\frac{\lambda-x}{d}\right)\right)^2dx.
\end{align*} A change of variables in the last integral allows us to compute
\begin{align*}       
        &\int_1^{\frac{\log(\lambda-4d)}{\log 2}+1}\left(\log\left(\frac{\lambda-x}{d}\right)\right)^2dx= d\int_{\frac{\lambda}{d}-\frac{\log(\lambda-4d)}{d\log 2}-\frac{1}{d}}^{\frac{\lambda}{d}-\frac{1}{d}}\left(\log\left(t\right)\right)^2dt\\
        &=d\left(\left(\frac{\lambda}{d}-\frac{1}{d}\right)\log^2\left(\frac{\lambda}{d}-\frac{1}{d}\right)-2\left(\frac{\lambda}{d}-\frac{1}{d}\right)\log\left(\frac{\lambda}{d}-\frac{1}{d}\right)+2\left(\frac{\lambda}{d}-\frac{1}{d}\right)\right)\\
        &- d\left(\left(\frac{\lambda}{d}-\frac{\log(\lambda-4d)}{d\log 2}-\frac{1}{d}\right)\log^2\left(\frac{\lambda}{d}-\frac{\log(\lambda-4d)}{d\log 2}-\frac{1}{d}\right)\right)\\
        &-d\left(-2\left(\frac{\lambda}{d}-\frac{\log(\lambda-4d)}{d\log 2}-\frac{1}{d}\right)\log\left(\frac{\lambda}{d}-\frac{\log(\lambda-4d)}{d\log 2}-\frac{1}{d}\right)+2\left(\frac{\lambda}{d}-\frac{\log(\lambda-4d)}{d\log 2}-\frac{1}{d}\right)\right)\\
        &=\left(\lambda-1\right)\log^2\left(\frac{\lambda}{d}-\frac{1}{d}\right)-2\left(\lambda-1\right)\log\left(\frac{\lambda}{d}-\frac{1}{d}\right)+2\left(\lambda-1\right)\\
        &-  \left(\lambda-\frac{\log(\lambda-4d)}{\log 2}-1\right)\log^2\left(\frac{\lambda}{d}-\frac{\log(\lambda-4d)}{d\log 2}-\frac{1}{d}\right)\\
        &+2\left(\lambda-\frac{\log(\lambda-4d)}{\log 2}-1\right)\log\left(\frac{\lambda}{d}-\frac{\log(\lambda-4d)}{d\log 2}-\frac{1}{d}\right)-2\left(\lambda-\frac{\log(\lambda-4d)}{\log 2}-1\right)\\
        &=\left(\lambda-1\right)\log^2\left(\frac{\lambda}{d}-\frac{1}{d}\right)-2\left(\lambda-1\right)\log\left(\frac{\lambda}{d}-\frac{1}{d}\right)\\
        &-  \left(\lambda-\frac{\log(\lambda-4d)}{\log 2}-1\right)\log^2\left(\frac{\lambda}{d}-\frac{\log(\lambda-4d)}{d\log 2}-\frac{1}{d}\right)\\
        &+2\left(\lambda-\frac{\log(\lambda-4d)}{\log 2}-1\right)\log\left(\frac{\lambda}{d}-\frac{\log(\lambda-4d)}{d\log 2}-\frac{1}{d}\right)+\frac{2\log(\lambda-4d)}{\log 2}.
   \end{align*} 
In consequence,
\begin{align*}
    I_1 &\leq  4\sum_{j\in \mathbb{Z}}\iint_{ \Omega_{j}} 2^jdkdy\\
    &\leq \frac{1}{\pi^2b^2}\sum_{j\leq \frac{\log(\lambda-4d)}{\log 2}+1} 2^j R_j\\
    &\leq \frac{2}{\pi^2b^2}  \log^2\left(\frac{\lambda}{d}\right) \\
    &+ \frac{4}{\pi^2b^2}\left(\left(\lambda-1\right)\log^2\left(\frac{\lambda}{d}-\frac{1}{d}\right)-2\left(\lambda-1\right)\log\left(\frac{\lambda}{d}-\frac{1}{d}\right)\right)\\
        &- \frac{4}{\pi^2b^2}\left(\lambda-\frac{\log(\lambda-4d)}{\log 2}-1\right)\log^2\left(\frac{\lambda}{d}-\frac{\log(\lambda-4d)}{d\log 2}-\frac{1}{d}\right)\\
        &+\frac{8}{\pi^2b^2}\left(\left(\lambda-\frac{\log(\lambda-4d)}{\log 2}-1\right)\log\left(\frac{\lambda}{d}-\frac{\log(\lambda-4d)}{d\log 2}-\frac{1}{d}\right)+\frac{\log(\lambda-4d)}{\log 2}\right).
\end{align*}
So, when $\lambda$ is large enough, one can estimate
\begin{align*}
     I_1 &\lesssim \frac{1}{\pi^2b^2}(\lambda \log^2(\lambda)-2\lambda\log(\lambda)-(\lambda- \log_2(\lambda))\log^2(\lambda-\log_2(\lambda))\\
     &+2(\lambda-\log_2(\lambda))\log(\lambda-\log_2(\lambda))+\log^2(\lambda)+\log_2(\lambda)).
\end{align*} 
\subsection{The dyadic analysis of the phase space: a lower bound}
For the analysis of $I_2,$ let us take $2d'=\max\{2/d_1,2/d_2\},$ and then 
\begin{align*}
I_2 &\geq 4 \iint_{\mathbb{R}^+_0\times \mathbb{R}^+_0}  \big(\lambda-\frac{2}{d_1}\cosh(2\pi bk)- \frac{2}{d_2}\cosh(2\pi by)\big)_+d kd y \\
&\geq 4 \iint_{\mathbb{R}^+_0\times \mathbb{R}^+_0} \big(\lambda-2d'\cosh(2\pi bk)-2 d'\cosh(2\pi by)\big)_+d kd y.
\end{align*} Let us decompose the set $\mathbb{R}^+_0\times \mathbb{R}^+_0$ in the family of dyadic zones  
\begin{equation}
    \widetilde\Omega_{j}=\{(k,y):2^{j-1}\leq \lambda-2d'\cosh(2\pi bk)- 2d'\cosh(2\pi by)\leq 2^j \},\quad j\in \mathbb{Z}.
\end{equation} 
We have the estimate
\begin{align*}
    I_2 &\geq 4  \iint_{\mathbb{R}^+_0\times \mathbb{R}^+_0} \big(\lambda-2d'\cosh(2\pi bk)- 2d'\cosh(2\pi by)\big)_+d kd y\asymp 4\sum_{j\in \mathbb{Z}}\iint_{ \widetilde\Omega_{j}} 2^jdkdy. 
\end{align*}
Note that for any $j\in \mathbb{Z},$ we have that 
\begin{align*}
    \widetilde \Omega_{j}=\{(y,k):\frac{\lambda-2^j}{2d'}\leq \cosh(2\pi bk)+ \cosh(2\pi by)\leq \frac{\lambda-2^{j-1}}{2d'} \}.
\end{align*} So, in our further analysis we fix $\lambda>4d'-1.$
Note that
$$\forall (y,k),\quad 2\leq \cosh(2\pi bk)+ \cosh(2\pi by). $$ In consequence, 
$$ \frac{\lambda-2^{j-1}}{2d'} <2\Longleftrightarrow j>\frac{\log(\lambda-4d')}{\log 2}+1,\textnormal{ and for these values of } j,\,\quad \widetilde\Omega_j=\{\emptyset\}.$$
Observe that the decomposition of $\mathbb{R}^+_0\times \mathbb{R}^+_0$ into the sets $\widetilde\Omega_j$ runs over $j\in \mathbb{Z}$ such that $j\leq \frac{\log(\lambda-4d')}{\log 2}+1. $

The change of variables $(u,v)=(\cosh(2\pi bk),\cosh(2\pi by))$ leads to 

\begin{align*} 
&4\sum_{j\in \mathbb{Z}}\iint_{ \widetilde\Omega_{j}} 2^jdkdy \\
&\asymp \frac{1}{\pi^2b^2} \sum_{j\leq \frac{\log(\lambda-4d')}{\log 2}+1} 2^j \iint_{  \{(u,v): \frac{\lambda-2^j}{2d'}\leq u+v\leq \frac{\lambda-2^{j-1}}{2d'},\,u,v\geq 1 \} }\frac{dudv}{\sqrt{u^2-1}\sqrt{v^2-1}}.
\end{align*} Let us analyse the integral
$$ R_j:= \iint_{  \{(u,v): \frac{\lambda-2^j}{2d'}\leq u+v\leq \frac{\lambda-2^{j-1}}{2d'},\,u,v\geq 1 \} }\frac{dudv}{\sqrt{u^2-1}\sqrt{v^2-1}}.$$
   \begin{item}
   \item Case 1: $\frac{\lambda-2^j}{2d'}<1.$ In this case $\frac{\log(\lambda-2d')}{\log 2}<j.$ As above let us define $A= \frac{\lambda-2^j}{2d'}$ and $B=\frac{\lambda-2^{j-1}}{2d'}.$
   Since $\lambda\rightarrow\infty,$ we can fix a constant $M_0\gg 2d',$ such that $\lambda\geq M_0+1.$ Then, for $j\leq 0,$  we can chose $\alpha<1,$ such that
   $$ B= \frac{2d'}{\lambda-2^{j-1}}\leq \frac{2d'}{\lambda-1}\leq \frac{2d'}{M_0}<\alpha<1.$$ Note that this choice of $\alpha$ allows us to compare $1/B<\alpha<1,$ or equivalently, $1<B\alpha<B,$ with $\alpha$ independent of $\lambda$ and of $j\neq 0.$ In general, observe that for any $j$ with $\widetilde\Omega_j\neq \{\emptyset\},$ that $B\geq 2,$ and then one can always choose $1/B\leq 1/2\leq \alpha<1,$ independent of $\lambda$ and of $j\leq \frac{\log(\lambda-4d)}{\log 2}+1,$ satisfying that $1\leq B\alpha<B.$
   
   Keeping in mind this choice of $\alpha$  
   we have that
   \begin{align*}
       R_j &= \iint_{  \{(u,v): \frac{\lambda-2^j}{2d'}\leq u+v\leq \frac{\lambda-2^{j-1}}{2d'},\,u,v\geq 1 \} }\frac{dudv}{\sqrt{u^2-1}\sqrt{v^2-1}}\\
       &\asymp \int\limits_{1}^B\frac{1}{\sqrt{u^2-1}}\int\limits_{1}^{B-u}\frac{dvdu}{\sqrt{v^2-1}}\\
       &\geq \int\limits_{1}^{\alpha B}\frac{\cosh^{-1}(B-u)du}{\sqrt{u^2-1}}\\
       &\geq \int\limits_{1}^{\alpha B}\frac{\cosh^{-1}(B-B\alpha)du}{\sqrt{u^2-1}}\\
       &= \cosh^{-1}(B(1-\alpha)) \cosh^{-1}(\alpha B)\geq \log^2(B\varkappa) ,
   \end{align*} where $0<\varkappa=\min(\alpha,1-\alpha).$ Note that the choice $\alpha=1/2,$ optimises the value of $\varkappa=1/2.$ In this case we have estimated
   \begin{equation*}
        R_j\gg  \log^2(B/2).
   \end{equation*}
   \item Case 2: $\frac{\lambda-2^j}{2d'}\geq 1,$ and then $\frac{\log(\lambda-2d')}{\log 2}\geq j.$ With $A= \frac{\lambda-2^j}{2d'}$ and with $B=\frac{\lambda-2^{j-1}}{2d'},$ we have that
   \end{item}
   \begin{align*}
         R_j &= \iint_{  \{(u,v): \frac{\lambda-2^j}{2d'}\leq u+v\leq \frac{\lambda-2^{j-1}}{2d'},\,u,v\geq 1 \} }\frac{dudv}{\sqrt{u^2-1}\sqrt{v^2-1}}\\
         &\asymp \int\limits_{1}^A\frac{1}{\sqrt{u^2-1}}\int\limits_{1}^{B-u}\frac{dv du}{\sqrt{v^2-1}}+ \int\limits_{A}^B\frac{1}{\sqrt{u^2-1}}\int\limits_{1}^{B-u}\frac{dv du}{\sqrt{v^2-1}}\\
          &=  \int\limits_{1}^B\frac{1}{\sqrt{u^2-1}}\int\limits_{1}^{B-u}\frac{dv du}{\sqrt{v^2-1}}\\
         & = \int\limits_{1}^B\frac{\cosh^{-1}(B-u)du}{\sqrt{u^2-1}}.
   \end{align*} Observe that for any $j$ with $\widetilde\Omega_j\neq \{\emptyset\},$ one has that $B\geq 2,$ and then one can always choose $1/B\leq 1/2<\alpha<1$ independent of $\lambda$ and of $j\leq \frac{\log(\lambda-4d')}{\log 2}+1,$ satisfying $1\leq B\alpha<B.$The analysis above allows us to estimate
$$  \int\limits_{1}^B\frac{\cosh^{-1}(B-u)du}{\sqrt{u^2-1}}\geq   \int\limits_{1}^{\alpha B}\frac{\cosh^{-1}(B-u)du}{\sqrt{u^2-1}}\geq \cosh^{-1}(B(1-\alpha)) \cosh^{-1}(\alpha B)$$
   $$\geq \log^2(B/2).$$ In consequence,
   \begin{align*}
      & \sum_{j\leq \frac{\log(\lambda-4d')}{\log 2}+1} 2^j R_j\\
      &=  \sum_{j\leq \frac{\log(\lambda-4d')}{\log 2}+1 : -\infty<j\leq 0  } 2^j R_j+\sum_{0 <j\leq \frac{\log(\lambda-4d')}{\log 2}+1 } 2^j R_j\\
       &= \sum_{j\leq \frac{\log(\lambda-4d')}{\log 2}+1 : -\infty<j\leq 0   } 2^j \log^2\left(\frac{\lambda-2^{j-1}}{4d'}\right)+\sum_{0 <j\leq \frac{\log(\lambda-4d')}{\log 2}+1 } 2^j R_j\\
         &\geq  \log^2\left(\frac{\lambda-1}{4d'}\right) +\sum_{1\leq j\leq \frac{\log(\lambda-4d')}{\log 2}+1 } 2^j R_j
         \\
         &\gg  \log^2\left(\frac{\lambda-1}{4d'}\right) +\sum_{1\leq j\leq \frac{\log(\lambda-4d)}{\log 2}+1 } 2^j \left(\log\left(\frac{\lambda-2^{j-1}}{4d'}\right)\right)^2.
   \end{align*}As above, we can make the estimate of the last series as a Riemann summation. Hence,
   \begin{align*}
       \sum_{1\leq j\leq \frac{\log(\lambda-4d)}{\log 2}+1 } 2^j \left(\log\left(\frac{\lambda-2^{j-1}}{4d'}\right)\right)^2 &=4\sum_{1\leq j\leq \frac{\log(\lambda-4d')}{\log 2}+1 } 2^{j-2} \left(\log\left(\frac{\lambda-2^{j-1}}{4d'}\right)\right)^2\\
       &\asymp \int_1^{\frac{\log(\lambda-4d')}{\log 2}+1}\left(\log\left(\frac{\lambda-x}{4d'}\right)\right)^2dx.
\end{align*} A change of variables in the last integral allows us to compute
\begin{align*}       
        &= 4d'\int_{\frac{\lambda}{4d'}-\frac{\log(\lambda-4d')}{4d'\log 2}-\frac{1}{4d'}}^{\frac{4\lambda}{d'}-\frac{1}{4d'}}\left(\log\left(t\right)\right)^2d't\\
        &=4\left(\lambda-1\right)\log^2\left(\frac{\lambda}{4d'}-\frac{1}{4d'}\right)-8\left(\lambda-1\right)\log\left(\frac{\lambda}{4d'}-\frac{1}{4d'}\right)\\
        &-  4\left(\lambda-\frac{\log(\lambda-4d')}{\log 2}-1\right)\log^2\left(\frac{\lambda}{d'}-\frac{\log(\lambda-4d')}{4d'\log 2}-\frac{1}{d'}\right)\\
        &+8\left(\lambda-\frac{\log(\lambda-4d')}{\log 2}-1\right)\log\left(\frac{\lambda}{d'}-\frac{\log(\lambda-4d')}{4d'\log 2}-\frac{1}{d'}\right)+\frac{8\log(\lambda-4d')}{\log 2}.
   \end{align*} 
In consequence,
\begin{align*}
    I_2 &\geq  4\sum_{j\in \mathbb{Z}}\iint_{ \Omega_{j}} 2^jdkdy\\
    &\asymp \frac{1}{\pi^2b^2}\sum_{j\leq \frac{\log(\lambda-4d')}{\log 2}+1} 2^j R_j\\
    &\gg\frac{1}{\pi^2b^2}  \log^2\left(\frac{\lambda-1}{4d'}\right) \\
    &+ \frac{1}{\pi^2b^2}\left(\left(\lambda-1\right)\log^2\left(\frac{\lambda}{4d'}-\frac{1}{4d'}\right)-2\left(\lambda-1\right)\log\left(\frac{\lambda}{4d'}-\frac{1}{4d'}\right)\right)\\
        &- \frac{1}{\pi^2b^2}\left(\lambda-\frac{\log(\lambda-4d')}{\log 2}-1\right)\log^2\left(\frac{\lambda}{4d'}-\frac{\log(\lambda-4d')}{4d'\log 2}-\frac{1}{4d'}\right)\\
        &+\frac{2}{\pi^2b^2}\left(\left(\lambda-\frac{\log(\lambda-4d')}{\log 2}-1\right)\log\left(\frac{\lambda}{4d'}-\frac{\log(\lambda-4d')}{4d'\log 2}-\frac{1}{4d'}\right)+\frac{\log(\lambda-4d')}{\log 2}\right).
\end{align*} In consequence, when $\lambda\rightarrow \infty,$ we acomplish the two conditions on $\lambda$ imposed above, namely that, $\lambda>4d-1,4d'-1,$ and  we allow for the estimate
\begin{align*}
    I_2 &\gg \frac{1}{\pi^2b^2}(\lambda \log^2(\lambda)-2\lambda\log(\lambda)-(\lambda- \log_2(\lambda))\log^2(\lambda-\log_2(\lambda))\\
     &+2(\lambda-\log_2(\lambda))\log(\lambda-\log_2(\lambda))+\log^2(\lambda)+\log_2(\lambda)).
\end{align*}  
The proof of Theorem \ref{theo1} is complete.  
\end{proof}
\noindent {\bf Acknowledgement.}   The author has been supported by the FWO Fellowship
Grant No 1204824N and by the FWO Grant K183725N of the Belgian Research Foundation FWO.

\bibliographystyle{amsplain}

\end{document}